\newtheorem{theorem}{Theorem}[section]
\newtheorem{lemma}[theorem]{Lemma}
\newtheorem{meta-theorem}[theorem]{Meta-Theorem}
\newtheorem{remark}[theorem]{Remark}
\newtheorem{definition}[theorem]{Definition}
\crefname{theorem}{Theorem}{Theorems}
\crefname{proposition}{Proposition}{Propositions}
\crefname{observation}{Observation}{Observations}
\crefname{lemma}{Lemma}{Lemmas}
\crefname{claim}{Claim}{Claims}
\crefname{problem}{Problem}{Problems}
\crefname{conjecture}{Conjecture}{Conjectures}
\crefname{question}{Question}{Questions}
\crefname{example}{Example}{Examples}
\crefname{fact}{Fact}{Facts}
\definecolor{darkgreen}{rgb}{0,0.5,0}
\algnewcommand\algorithmicswitch{\textbf{switch}}
\algnewcommand\algorithmiccase{\textbf{case}}
\newcommand{\eps}{\varepsilon}
\renewcommand{\P}{\textrm{P}}
\newcommand{\E}{\textrm{E}}
\newcommand{\R}{\mathbb{R}}
\renewcommand{\paragraph}[1]{\vspace{0.15cm}\noindent {\bf #1}:}
\newcommand{\FullOrShort}{full}
  \newcommand{\fullOnly}[1]{#1}
  \newcommand{\shortOnly}[1]{}
    \newcommand{\fullOnly}[1]{}
    \newcommand{\IncludePictures}[1]{}
\renewcommand{\phi}{\varphi}
\newcommand{\start}{i_{2\ell}}
\newcommand{\test}{i_{\ell}}
\title{Noisy k-means++ Revisited} %TODO Please add
\author{
  Christoph Grunau \\
  \small{ETH Zurich}\\
  \small{cgrunau@inf.ethz.ch}
  \and
  Ahmet Alper Özüdoğru \\
  \small{ETH Zurich}\\
  \small{oahmet@student.ethz.ch}
  \and  
Václav Rozhoň\\ 
\small{ETH Zurich} \\ 
\small{rozhonv@ethz.ch}
}
\begin{document}

\maketitle

%TODO mandatory: add short abstract of the document
\begin{abstract}
    The $k$-means++ algorithm by Arthur and Vassilvitskii [SODA 2007] is a classical and time-tested algorithm for the $k$-means problem. While being very practical, the algorithm also has good theoretical guarantees: its solution is $O(\log k)$-approximate, in expectation.  
    
    In a recent work, Bhattacharya, Eube, Roglin, and Schmidt [ESA 2020] considered the following question: does the algorithm retain its guarantees if we allow for a slight adversarial noise in the sampling probability distributions used by the algorithm? This is motivated e.g. by the fact that computations with real numbers in $k$-means++ implementations are inexact. 
    Surprisingly, the analysis under this scenario gets substantially more difficult and the authors were able to prove only a weaker approximation guarantee of $O(\log^2 k)$. In this paper, we close the gap by providing a tight, $O(\log k)$-approximate guarantee for the $k$-means++ algorithm with noise. 
\end{abstract}

\section{Introduction}

The $k$-means problem is a classical problem in computer science: given a \emph{point set} $X \subseteq \R^d$ consisting of $n$ points  and a parameter $k$,
we are asked to return a set of $k$ \emph{clusters} with corresponding cluster centers $C \subseteq \R^d$ so as to minimize the sum of the squared distances of points of $X$ with respect to their closest cluster center in $C$.
Formally, we are asked to minimize the function $\phi(X, C)$ defined by $\phi(x, C) = \min_{c \in C} || x - c||^2$ for a single point $x$ and as $\phi(X, C) = \sum_{x \in X} \phi(x, C)$ for a set of points.  

There exists some fixed constant $c > 1$ such that it is NP-hard to find a $c$-approximate solution to the $k$-means objective \cite{aloise2009np,awasthi2015hardness}.
On the other hand, a substantial amount of work has been devoted to finding polynomial time algorithms with a good approximation guarantee, with the currently best approximation ratio being 5.912 \cite{cohenaddad2022best_apx}.
On the practical side, the celebrated clustering algorithm $k$-means++ by Arthur and Vassilvitskii \cite{arthur2007k} is one of the classical algorithms for the $k$-means problem. Due to its simplicity, it is widely used in practice, for example in the well-known Python Scikit-learn library \cite{scikit-learn}. It is also very appealing from the theoretical perspective, as it returns a solution that is $O(\log k)$-approximate, in expectation. 

The $k$-means++ algorithm (\cref{alg:kmpp_eps_adversary} with $\eps = 0$) is indeed very simple: we sample $C \subseteq X$ in $k$ steps. The first center is taken as a uniformly random point of $X$. To get each subsequent center, we always first compute the current costs $\phi(x, C_{i})$ for each $x \in X$; then we sample each point of $X$ as the next center with probability proportional to $\phi(x, C_i)$. 

In \cite{bhattacharya2019noisy}, the authors made an intriguing observation: the classical analysis of the algorithm by Arthur and Vassilvitski \cite{arthur2007k} fails to work if we allow small errors in the sampling probabilities. That is, consider \cref{alg:kmpp_eps_adversary}: this is the $k$-means++ algorithm, however, with an additional small positive parameter $\eps$. 
In every step, before we sample, we allow an adversary to perturb the sampling distribution such that the multiplicative change of each probability is within $1 \pm \eps$ of its original value.

\begin{algorithm}[h]
	\caption{$(1+\eps)$-noisy $k$-means++}
	\label{alg:kmpp_eps_adversary}
	Input: $X$, $k$, $0 \leq \eps < 1/2$
	\begin{algorithmic}[1]
		\State Sample $x \in X$ w.p. in $\left[(1 - \eps) \cdot \frac{1}{n}, (1 + \eps) \cdot \frac{1}{n}\right]$, set $C_1 = \{ x \}$. 
		\For{$i \leftarrow 0, 1, \dots, k-1$}
		\State Sample $x \in X$ w.p. in $\left[(1 - \eps) \cdot \frac{\phi(x, C_{i})}{\phi(X, C_{i})}, (1 + \eps) \cdot \frac{\phi(x, C_{i})}{\phi(X,C_{i})}\right]$ and set $C_{i+1} = C_{i} \cup \{x\}$.
		\EndFor
	\Return $C := C_k$
	\end{algorithmic}
\end{algorithm}

Does the noisy $k$-means++ algorithm retain the original guarantees? This question is natural since in every implementation, there are small numerical errors associated with the distance computations made by \cref{alg:kmpp_eps_adversary}. It would be shocking if these errors could substantially affect the quality of the algorithm's output! From a more theoretical perspective, the authors of \cite{bhattacharya2019noisy} considered this problem as a first step towards understanding other questions related to the $k$-means++ algorithm, in particular the analysis of the \emph{greedy} variant of $k$-means++, a related algorithm later analyzed in \cite{grunau_ozudogru_rozhon_tetek2022nearly}. 

Going back to noisy $k$-means++, the authors of \cite{bhattacharya2019noisy} proved that \cref{alg:kmpp_eps_adversary} remains $O(\log^2 k)$-approximate even for small constant $\eps$ (think e.g. $\eps = 0.01$). In this paper, we improve their analysis to recover the tight $O(\log k)$-approximation guarantee. That is, we show that the adversarial noise worsens the approximation guarantee by at most a constant multiplicative factor.

\begin{theorem}
\label{thm:main}
\cref{alg:kmpp_eps_adversary} is $O(\log k)$-approximate, in expectation. 
\end{theorem}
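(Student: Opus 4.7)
The plan is to closely follow the Arthur--Vassilvitskii analysis~\cite{arthur2007k}, noting precisely where the noise factor $(1\pm\eps)$ enters, and then to restructure the global induction so that the slack does not compound to an extra $\log k$ as it does in~\cite{bhattacharya2019noisy}. The argument splits into three pieces: noisy versions of the two local building-block lemmas, a macroscopic accounting of sampling probabilities, and a global inductive argument that ties everything together.

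First I would re-derive the two pointwise lemmas driving~\cite{arthur2007k}: (i) sampling the first center from the noisy near-uniform distribution, conditioned on falling in an optimal cluster $A$, gives $\E[\phi(A,\{c\})] \leq O(\phi(A,C^*))$; and (ii) sampling the next center from the noisy $D^2$ distribution, conditioned on falling in an uncovered optimal cluster $A$, gives $\E[\phi(A,C_{i+1})] \leq O(\phi(A,C^*))$, where $C^*$ is any optimal center set. The key observation is that because every pointwise probability lies multiplicatively in $[1-\eps,1+\eps]$, the \emph{conditional} distribution within any subset (in particular within $A$) differs from its noiseless counterpart by a factor of at most $(1+\eps)/(1-\eps) = 1 + O(\eps)$ per point. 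This local distortion can be absorbed into the constants of the AV proofs without altering their structure. Similarly, by summing noisy per-point probabilities, the probability that a single step samples from a subset $S \subseteq X$ lies in $[1-\eps,1+\eps]\cdot \phi(S,C_i)/\phi(X,C_i)$, so the probability of hitting a currently uncovered cluster agrees with its noiseless analogue up to a factor $1 + O(\eps)$.

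The main task is then to recover the $O(\log k)$ bound from these local facts via a global induction. I would induct jointly on the step $i$ and on the number $u_i$ of still-uncovered optimal clusters, aiming for an invariant of the shape $\E[\phi(X,C_k)\mid C_i] \leq \alpha\,(1+\ln u_i)\,\phi(X,C^*)$ for an absolute constant $\alpha = \alpha(\eps)$. The inductive step splits the next sample into the case ``hits an uncovered cluster'' (where the two local lemmas above give an $O(\phi(A,C^*))$ contribution and $u_i$ drops by one) and the case ``hits a covered cluster'' (where $u_i$ is unchanged and the expected cost must be charged to the current potential $\phi(X,C_i)$). The crucial bookkeeping trick is to charge the $1 + O(\eps)$ noise slack \emph{per decrement of $u$} rather than per algorithm step, so that it accumulates to a constant rather than to $(1+O(\eps))^k$, which a union-type bound would then have to pay an extra $\log k$ to control.

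The hardest part, and precisely the source of the extra $\log k$ in~\cite{bhattacharya2019noisy}, is this global accounting against an adaptive adversary. The noise at step $i$ may depend arbitrarily on all earlier randomness, so the inductive hypothesis must be phrased so as to remain valid under any such coupling. Setting up the invariant so that the adversary's only nontrivial leverage is on uncovered-cluster hits---of which there are at most $k$ across the entire run---is what I expect to require the most care, and is where the improvement from $\log^2 k$ to $\log k$ will actually come from.
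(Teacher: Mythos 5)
Your plan diverges substantially from the paper's, and the critical step is not resolved. The paper does \emph{not} redo the Arthur--Vassilvitskii induction from scratch. It reuses the reduction theorem already proved in Bhattacharya et al.\ (\cref{thm:reduction}), which says that noisy $k$-means++ is $O(f(k)\log k)$-approximate whenever the ``adversarial advantage'' of a cleanly isolated sampling game (\cref{def:process_eps_adversary}) is at most $f(k)$; the entire new contribution is \cref{lem:main}, which improves the bound on the adversarial advantage from $O(\log k)$ to $O(1)$ by a careful big/medium/small weight decomposition and a Chernoff argument on hitting large elements. So the paper is solving a much more localized problem than the one you set up.

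The place where your sketch has a genuine gap is the sentence ``charge the $1+O(\eps)$ noise slack per decrement of $u$ rather than per algorithm step, so that it accumulates to a constant.'' There is no reason for this to give a constant: $u$ starts near $k$ and can decrement $\Theta(k)$ times over the run, so per-decrement charging still gives $(1+O(\eps))^{\Theta(k)}$, which is exponential in $k$ rather than $O(1)$. Moreover, the bottleneck in the noisy analysis is not really the compounding of pointwise $(1\pm\eps)$ factors on hit/miss probabilities (both of which you correctly note are well-behaved locally); it is that the ratio ``total weight of remaining uncovered mass divided by number of remaining centers to pick'' can \emph{drift upward} over time under an adaptive adversary, breaking the monotonicity (cf.\ \cref{eq:easy}) that the noiseless proof relies on. Your inductive hypothesis $\E[\phi(X,C_k)\mid C_i]\le\alpha(1+\ln u_i)\phi(X,C^*)$ would, once you try to close the induction on a miss-step, force you to bound exactly this drift — which is precisely the sampling game of \cref{def:process_eps_adversary}. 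Your proposal acknowledges this is ``where the improvement will actually come from'' but offers no mechanism for it; the paper's big/medium/small bucketing and the $e^{-\Omega(\ell)}$ bad-event bound are the actual content, and nothing in your sketch substitutes for them.
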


\begin{remark}
It would be interesting to see an analysis of the approximation ratio of \cref{alg:kmpp_eps_adversary} that would be within a $1+O(\eps)$-factor of the classical $k$-means++ analysis from \cite{arthur2007k}, or a counterexample showing this is not possible. In our analysis, we lose a very large constant factor even for very small $\eps$. 
\end{remark}

\paragraph{Related Work}
There is a lot of work related to the $k$-means++ algorithm, both improving the algorithm or its analysis \cite{lattanzi2019better,choo2020kmeans,aggarwal2009adaptive,wei2016constant,makarychev_reddy_shan2020improved,bhattacharya2019noisy,grunau_ozudogru_rozhon_tetek2022nearly} and adapting it to other setups \cite{bahmani2012scalable,bachem2016approximate, rozhovn2020simple,makarychev_reddy_shan2020improved,bachem2016fast,bachem2017distributed,bhaskara2019kmeans++_with_outliers_and_penalties,grunau_rozhon2020adapting_kmeans_to_outliers}.

\paragraph{Acknowledgements} We would like to thank Mohsen Ghaffari for many helpful comments. 

\section{Reduction to a Sampling Game}
\label{sec:game}

To analyze \cref{alg:kmpp_eps_adversary}, the authors of \cite{bhattacharya2019noisy} follow the proof of \cite{arthur2007k} (more precisely, they follow the proof from \cite{Dasgupta}) and show that most arguments of that proof, in fact, work even in the adversarial noise scenario. The part of the proof that does not generalize from $\eps = 0$ to $\eps > 0$ can be distilled into a simple sampling process that we analyze in this paper. We next describe this process and state its relation to the analysis of noisy $k$-means++ (cf. the discussion on page 15 of \cite{bhattacharya2019noisy}). 

\begin{definition}[$(1+\eps)$-adversarial sampling process]
\label{def:process_eps_adversary}
Let $0 < \eps < 1/2$. 
We define the \emph{$(1+\eps)$-adversarial sampling process} as follows. 
At the beginning, there is a set $E_0$ of $k$ elements where each element $e \in E_0$ has some nonnegative weight $w_0(e)$. 
The process has $k$ rounds where in each round, we form the new set $E_{i+1}$ from $E_i$ as follows:
\begin{enumerate}
    \item We define the distribution $D_i$ over $E_i$ where the probability of selecting $e \in E_i$ is defined as $w_i(e) / \sum_{e \in E_i} w_i(e)$. Next, an adversary chooses an arbitrary distribution $D^{\eps}_i$ over $E_i$ that satisfies for any $e \in E_i$ that
    \begin{equation}
    \label{eq:adversarial_dist}
    (1-\eps)\P_{D_i}(e) \leq \P_{D^\eps_i}(e) \leq (1+\eps)\P_{D_i}(e).
    \end{equation}
    We sample an element $e_{i+1} \in E_i$ according to $D_i^\eps$ and set $E_{i+1} = E_i \setminus \{e_{i+1}\}$. 
    \item Next, an adversary chooses a new weight function $w_{i+1}(e)$ for every element $e \in E_{i+1}$ as an arbitrary function that satisfies
    \[
    0 \le w_{i+1}(e) \le w_i(e). 
    \]
\end{enumerate}
\end{definition}

We will be interested in the expected average weight of an element after some number of steps in this process, that is, we need to understand the value of $\E\left[\frac{\sum_{e \in E_i} w_i(e)}{k - i}\right]$ for $0 \le i < k$. If $\eps = 0$, one can prove that 
\begin{equation}
\label{eq:easy}
\E\left[\frac{\sum_{e \in E_i} w_i(e)}{k - i}\right] \le \frac{\sum_{e \in E_{i-1}} w_{i-1}(e)}{k - (i-1)}
\end{equation}
where the randomness is over the sampling in the $i$-th step (we always regard the adversary as fixed in advance). Why is \cref{eq:easy} true? The inequality would clearly hold with equality if the distribution $D_i$ were a uniform one and there was no adversary; we in fact give larger sampling probabilities to heavier elements in $D_i$ and, moreover, the adversary can lower the weights arbitrarily after we sample, but both of these operations can make the left-hand side of \cref{eq:easy} only smaller. 

However, this monotonic behavior is no longer true for $\eps > 0$. The question that needs to be analyzed as a part of the analysis of noisy $k$-means++ is whether the adversarial choices can make the average size of an element drift so that in the end the left-hand side of \cref{eq:easy} is substantially larger than $\sum_{e \in E_0} w_0(e) / k$. More precisely, we will need to bound the following quantity that we call the adversarial advantage.

\begin{definition}[Adversarial advantage]
We say that the adversarial advantage is at most some function $f$ if the following conclusion holds: Consider a $(1+\eps)$-adversarial sampling process on $k$ elements for any $0 < \eps < \frac12$, any starting set $E_0$, and any adversary. For any $0 \le i < k$, we have
\begin{equation}
\label{eq:advantage}
\E\left[\frac{\sum_{e \in E_i} w_i(e)}{k - i}\right] 
\le f(k) \cdot  \frac{\sum_{e \in E_0} w_0(e)}{k}. 
\end{equation}
\end{definition}

Although we require the inequality \cref{eq:advantage} to hold for all $i$, note that for all $0 \le i \le (1-\delta)k$ we can choose $f(k) = 1/\delta$ in \cref{eq:advantage} and it will be satisfied for those values of $i$ simply because $\sum_{e \in E_i} w_i(e) \le \sum_{e \in E_0} w_0(e)$ is true deterministically. Thus, intuitively, $i = k-1$ is the hardest case.

In \cite{bhattacharya2019noisy}, the authors proved that if we adapt the analysis of $k$-means++ to the noisy $k$-means++, it only picks up the multiplicative factor of $f(k)$. That is, analyzing the $(1+\eps)$-adversarial sampling process is enough to get an upper bound for noisy $k$-means++. The following theorem is proven in \cite{bhattacharya2019noisy} (it is proven only for $f(k) = O(\log k)$, but it directly generalizes to any $f(k)$). 

\begin{theorem}[Theorem 2 in \cite{bhattacharya2019noisy}]
\label{thm:reduction}
For any $0 < \eps < 1/2$, $(1+\eps)$-noisy $k$-means++ is $O(f(k) \cdot \log k)$-approximate, in expectation. 
\end{theorem}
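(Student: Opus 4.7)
The plan is to replay the Dasgupta-style inductive analysis of Arthur and Vassilvitskii in the presence of $(1+\eps)$-noisy sampling, isolating the one step where the classical proof fails and plugging in the adversarial-advantage bound. Fix an optimal $k$-clustering $A_1^*, \ldots, A_k^*$ of $X$ with centers $c_1^*, \ldots, c_k^*$, and call an optimal cluster \emph{covered} after step $i$ if $C_i$ intersects it and \emph{uncovered} otherwise. The classical proof inducts jointly on the number of remaining rounds and the number of still-uncovered clusters, bounding $\E[\phi(X, C_k)]$ as a sum of (i) the cost contributed by clusters eventually covered by a sampled center and (ii) the residual cost of clusters that remain uncovered after all $k$ rounds.

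Two ingredients transfer to the noisy setting with only constant-factor loss. First, the ``sampling-from-a-single-cluster'' lemmas — uniform seeding from $A_j^*$ yields expected cost at most $2\phi(A_j^*, \{c_j^*\})$ and $D^2$-sampling a point from $A_j^*$ yields expected cost at most $8\phi(A_j^*, \{c_j^*\})$ — transfer because, conditioned on landing in $A_j^*$, the noisy distribution over points of $A_j^*$ is within a multiplicative $\frac{1+\eps}{1-\eps} = O(1)$ factor of the true one. Second, the marginal probability that the next center lands inside a particular uncovered cluster is still within $(1\pm\eps)$ of its ideal value $\phi(A_j^*, C_i)/\phi(X, C_i)$, so each ``good-branch'' step of the recursion pays only a $(1\pm\eps)^{O(1)} = O(1)$ overhead.

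The one genuinely nontrivial step is bounding, in expectation, the total cost of the still-uncovered clusters after each iteration. In the noiseless case, sampling proportional to cost keeps the total uncovered-cost divided by the number of remaining uncovered clusters monotonically non-increasing in expectation (cf.\ \cref{eq:easy}), and this is exactly the invariant that fails under noise. But observe: restricting attention to uncovered clusters, with $E_i$ identified with the current set of uncovered clusters and $w_i(A_j^*) = \phi(A_j^*, C_i)$, the evolution is precisely a $(1+\eps)$-adversarial sampling process in the sense of \cref{def:process_eps_adversary} — the adversary's freedom to perturb within $(1\pm\eps)$ captures the noise, and its freedom to decrease weights captures the fact that augmenting $C_i$ can only lower $\phi(A_j^*,\cdot)$. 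Thus \cref{eq:advantage} applies and replaces each appeal to \cref{eq:easy} by an $f(k)$ factor.

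Assembling the pieces yields an approximation ratio of $O(f(k)\cdot \log k)$: the classical proof produces an $H_k = O(\log k)$ harmonic chain arising from the induction over remaining rounds, and the adversarial-advantage bound scales this by a single multiplicative $f(k)$. The principal obstacle is the careful bookkeeping of the potential-function recursion under noise and the verification that $f(k)$ enters only once as a multiplicative factor rather than compounding across levels; this computation is carried out in \cite{bhattacharya2019noisy} for the specific bound $f(k) = O(\log k)$, and since their argument uses \cref{eq:advantage} only as a black box, the generalization to arbitrary $f$ is immediate.
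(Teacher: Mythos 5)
The paper does not actually prove \cref{thm:reduction}: it is imported verbatim from \cite{bhattacharya2019noisy}, with the paper's only contribution being the observation that the proof there (which is carried out for the specific bound $f(k) = O(\log k)$) invokes the adversarial-advantage bound purely as a black box, so the generalization to arbitrary $f$ is immediate. Your sketch is therefore a reconstruction of the cited argument rather than an alternative to anything in this paper, and at a high level it captures the correct structure: a Dasgupta-style two-dimensional induction, the single-cluster seeding lemmas surviving with $O(1)$ loss, and the sampling game replacing the one invariant (\cref{eq:easy}) that breaks under noise.

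One point of imprecision worth flagging in your mapping to the sampling game: you identify $E_i$ with the uncovered clusters and the noise in $D_i^\eps$ with the algorithm's $(1\pm\eps)$ noise, but that identification is not literal. The game in \cref{def:process_eps_adversary} removes one element per round and assumes $D_i^\eps$ is a genuine distribution on $E_i$. In the algorithm, most rounds ``miss'' (land in an already-covered cluster) and remove no uncovered cluster; only the ``hit'' rounds correspond to steps of the game, with the miss rounds absorbed into the adversary's weight-decrease move. More importantly, the relevant distribution on uncovered clusters is the one \emph{conditioned on hitting}, and since both numerator and denominator carry $(1\pm\eps)$ perturbations, the conditional probability of hitting a given uncovered cluster is only within $\frac{1+\eps}{1-\eps}$ of ideal, not within $1+\eps$. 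So the game must be run with an inflated noise parameter $\eps' = \tfrac{2\eps}{1-\eps}$, and for $\eps$ near $1/2$ this exceeds the $\eps' < 1/2$ regime in which \cref{lem:main} is stated. This is a detail the cited paper handles (and the constants in \cref{lem:bad} can absorb it by re-tuning the thresholds), but your sketch glosses over it by treating the game's $\eps$ as identical to the algorithm's. Since the paper itself delegates all such bookkeeping to \cite{bhattacharya2019noisy}, this is a gap in the sketch's framing rather than a divergence from the paper.
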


In Lemma 10 of \cite{bhattacharya2019noisy}, the authors prove that $f(k) = O(\log k)$. The reason for this is that if an element $e \in E_0$ is $\Theta(\log k)$ times larger than the average size of an element of $E_0$, it will be sampled in the first $k/2$ steps of the process with probability $1 - 1/k^{O(1)}$. Thus, the contribution of elements $\Omega(\log k)$ larger than the average to the left-hand side of \cref{eq:advantage} is negligible even for $i = k-1$. Hence, $f(k) = O(\log k)$. 

\begin{lemma}[Lemma 10 in \cite{bhattacharya2019noisy}]
The adversarial advantage is at most $O(\log k)$. 
\end{lemma}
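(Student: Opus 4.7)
The plan is to fix a threshold $T = c \log k \cdot W$, where $W = \sum_{e \in E_0} w_0(e)/k$ is the initial average weight and $c$ is a sufficiently large constant depending on $1/(1-\eps)$, and split the total weight $S_i := \sum_{e \in E_i} w_i(e)$ into a capped portion and an overflow portion. Since $\sum_{e \in E_i} \min(w_i(e), T) \le (k-i)\, T$, the capped part alone contributes at most $T = O(\log k) W$ to $\E[S_i/(k-i)]$. Thus the entire task reduces to controlling the overflow
\[
H_i := \sum_{e \in E_i,\; w_i(e) > T} w_i(e),
\]
i.e., the total weight of currently-heavy elements.

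The central one-step estimate I would establish is
\[
\E[H_{i+1} \mid \mathcal F_i] \le \Bigl(1 - (1-\eps)\tfrac{T}{S_0}\Bigr) H_i.
\]
Two ingredients feed into this. First, because weights only decrease across steps, any element heavy at step $i+1$ was already heavy at step $i$, which gives the deterministic bound $H_{i+1} \le H_i - w_i(e_{i+1})\, \mathbbm{1}[e_{i+1} \text{ is heavy at step } i]$. Second, using the lower bound in \cref{eq:adversarial_dist} together with $S_i \le S_0$, the probability that the sampled element is heavy is at least $(1-\eps)\,H_i/S_0$, and whenever that happens $H$ drops by at least $T$. Iterating yields $\E[H_i] \le S_0 \exp\!\bigl(-(1-\eps)\, c\, i\, \log k / k\bigr)$.

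Combining this with $S_i \le (k-i)T + H_i$, I would split into two regimes to finish. For $i \le k/2$, the trivial bound $\E[H_i]/(k-i) \le S_0/(k/2) = 2W$ suffices. For $i > k/2$, the geometric decay gives $\E[H_i] \le S_0 \cdot k^{-(1-\eps)c/2}$, which is at most $W$ once $c \ge 2/(1-\eps)$ (so for $\eps<1/2$, any $c\ge 4$ works). Either way, $\E[S_i/(k-i)] \le T + 2W = O(\log k)\,W$, which is the desired bound.

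The subtlety I expect as the main obstacle is that the adversary can shrink a heavy element's weight below $T$ at any step, which might seem to let an initially heavy element escape the "heavy" accounting without being sampled. The purpose of working with the capped/overflow decomposition rather than with a fixed set of initially heavy elements is precisely to neutralise this move: once the adversary demotes an element to weight $\le T$, it flips into the capped part, where it contributes at most $T$ regardless of how long it persists, so no adversary strategy can sneak extra weight past both accounts simultaneously.
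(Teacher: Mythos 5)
Your proposal is correct, and your one-step estimate checks out: since any element heavy at step $i+1$ was already heavy at step $i$ and weights only decrease, $H_{i+1}\le H_i-w_i(e_{i+1})\mathbbm{1}[w_i(e_{i+1})>T]$ deterministically; taking conditional expectation and using $P_{D^\eps_i}(e)\ge(1-\eps)w_i(e)/S_i\ge(1-\eps)w_i(e)/S_0$ together with $w_i(e)>T$ on the heavy set yields $\E[H_{i+1}\mid\mathcal F_i]\le\bigl(1-(1-\eps)T/S_0\bigr)H_i$, and iterating gives the stated geometric decay. The two-regime finish (trivial for $i\le k/2$, geometric tail for $i>k/2$, with $c\ge 2/(1-\eps)<4$) is also sound.

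This paper does not prove the lemma itself; it cites Lemma~10 of Bhattacharya et al.\ and gives only a one-line sketch phrased in terms of individual initially-heavy elements: an element $\Theta(\log k)$ times the average is sampled in the first $k/2$ steps with probability $1-1/k^{O(1)}$, so its contribution is negligible. Your argument is the same core idea but packaged as a potential-function / geometric-decay statement about the total current overflow $H_i$, rather than a union bound over initially heavy elements. The two are morally equivalent, but your formulation is arguably cleaner: as you observe, in the per-element version one must separately note that the adversary shrinking a heavy element below threshold is also fine (it just shifts that element into the capped account), whereas the overflow potential $H_i$ absorbs this case automatically since $H$ is non-increasing under such adversary moves. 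Both routes give exactly the $O(\log k)$ bound; yours trades the per-element union bound for a single supermartingale-style recursion.
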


Our technical contribution is to show that the adversarial advantage is bounded by $O(1)$. 

\begin{lemma}
\label{lem:main}
The adversarial advantage is at most $O(1)$. 
\end{lemma}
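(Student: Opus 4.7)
Let $A_i := W_i/(k-i)$ denote the average weight of the remaining elements. My plan is to show $E[A_i] \le C \cdot A_0$ for all $i$ via a potential-function argument, with $C$ an absolute constant (possibly depending on $\eps$, but not on $k$).

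First, I would derive a sharp per-step drift bound on $A_i$. Conditioned on the state at step $i$, the adversary's optimal perturbation is characterized by an LP whose optimum splits $E_i$ into two sub-collections of equal total weight $W_i/2$ and applies the factor $(1-\eps)$ to the ``heavy'' half and $(1+\eps)$ to the ``light'' half. Writing $S_i = \sum_{e \in E_i} w_i(e)^2$ and $H_i - L_i$ for the difference of the sum-of-squares across the halves, this yields $E[w_{e_{i+1}} \mid \mathcal F_i] \ge S_i/W_i - \eps(H_i - L_i)/W_i$, which translates to $E[A_{i+1} - A_i \mid \mathcal F_i] \le (\eps(H_i - L_i) - V_i)/(W_i(k-i-1))$, where $V_i := \sum_{e \in E_i}(w_i(e)-\bar w_i)^2$ is the variance of the remaining weights. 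Hence the adversary can produce positive drift only when $V_i < \eps(H_i - L_i)$, and the worst-case per-step multiplicative drift is $1+O(\eps^2/(k-i))$.

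A naive multiplicative composition of this per-step bound gives $E[A_{k-1}] \le A_0 \cdot k^{O(\eps^2)}$, which matches the $O(\log k)$ bound of the previous work but is not the desired $O(1)$. To improve, I would argue that the adversary cannot sustain the worst per-step drift for $\Omega(k)$ consecutive steps: achieving it requires a specific ``balanced bimodal'' weight distribution with variance $V_i \approx \eps^2 \bar w_i^2$, which is disrupted by each sampling step (since the sampling remains heavy-biased even after the adversary's tilt), and can be restored only via weight reductions that directly decrease $W_i$ and hence $A_i$. I would formalize this trade-off by designing a potential function $\Phi_i$ combining $A_i$ with a variance term, e.g.\ $\Phi_i = A_i \cdot \psi(V_i/W_i^2)$ for a suitable increasing function $\psi$, ensuring $E[\Phi_{i+1} \mid \mathcal F_i] \le \Phi_i$, and then concluding via $E[A_{k-1}] \le \Phi_0 = O(A_0)$.

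The main obstacle is choosing the right form of $\Phi_i$ and verifying its supermartingale property: both the sampling step (which simultaneously perturbs $W_i$ and $V_i$) and the adversary's optional weight-reduction step (which trades $W_i$ for variance control) interact with $\Phi_i$ in nontrivial ways, and $\psi$ must be tuned so that the ``variance gain'' exactly cancels the drift of $A_i$ at every step. A secondary challenge is the tail regime $k-i = O(1)$, where deterministic bounds $A_i \le W_0/(k-i)$ are weak; I would handle this by splitting into a ``bulk'' phase covered by the potential argument (say $i \le k - k_0$ for a suitable $k_0$) and an $O(1)$-step tail phase handled by more direct pigeonhole/absorbing bounds.
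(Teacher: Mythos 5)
Your per-step drift computation is correct and is a nice observation: the adversary's optimal one-step tilt splits $E_i$ into a heavy and light half of equal \emph{total} weight and applies factors $(1-\eps)$ and $(1+\eps)$ respectively, and the resulting drift bound
$\E[A_{i+1}-A_i\mid\fF_i]\le\bigl(\eps(H_i-L_i)-V_i\bigr)/\bigl(W_i(k-i-1)\bigr)$,
with worst case roughly $A_i\cdot\eps^2/(4(k-i))$ attained by a balanced bimodal configuration, is right. However, the heart of your argument --- designing $\psi$ so that $\Phi_i=A_i\cdot\psi(V_i/W_i^2)$ is a supermartingale --- is not carried out, and you explicitly flag it as the ``main obstacle.'' This is not a routine verification step but rather the entire difficulty of the problem, and there is reason to doubt the specific form $\Phi_i=A_i\cdot\psi(V_i/W_i^2)$ can work at all. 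At the drift-maximizing bimodal configuration one has $V_i/W_i^2\approx\eps^2/(4(k-i))$, which \emph{increases} as $i$ increases; so if the adversary could (approximately) hold the bimodal shape across many steps, both $A_i$ and the normalized variance $V_i/W_i^2$ would increase, and with $\psi$ increasing the product $\Phi_i$ would increase too, violating the supermartingale property. Making the argument work would therefore require quantifying that re-balancing after each sample forces a large weight reduction --- precisely the interaction you acknowledge is ``nontrivial'' --- and you give no estimate of that trade-off. As it stands the proposal reduces the lemma to an unresolved question that is at least as hard as the original statement.

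For contrast, the paper's proof avoids tracking the full weight distribution (or any continuous potential of it) and instead partitions $E_i$ into big, medium, and small elements by absolute weight thresholds (weight $\ge 80$, in $(2,80)$, and $\le 2$, respectively, after normalizing the initial average to $1$). The key structural claim is that, with probability $1-e^{-\Omega(|S_i|)}$, the total weight of big elements satisfies $w_i(B_i)=O(|S_i|)$, which immediately forces $A_i=O(1)$. This is proved by a stopping-time/Chernoff argument: between the time the small set has size $2\ell$ and the time it has size $\ell$, there are at least $\ell$ steps where the sample is not medium, and in each such step (while $w(B)>4\ell$) the sample lands in $B$ with probability $\ge 1/5$, so $B$ is drained quickly. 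The error event (a ``bad'' $\ell$) then has probability $e^{-\Omega(\ell)}$, and the conditional bound $A_i\le 90\ell_{\max}$ from \cref{lem:approx} integrates to $O(1)$. This large-deviations structure is fundamentally different from a per-step expected-drift composition; if you want to pursue a potential-function proof you would essentially need to rediscover a surrogate for the exponential tail bound, and it is not clear that a smooth functional of $(A_i,V_i/W_i^2)$ alone can encode it.

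Two smaller points. First, your ``tail regime'' $k-i=O(1)$ is not actually a separate difficulty in the paper's approach: it is handled uniformly, because the bound $w_i(B_i)=O(|S_i|)$ together with $k-i\ge|S_i|+|M_i|+|B_i|$ gives $A_i=O(1)$ regardless of $k-i$. Second, note that in your per-step bound you implicitly use that the adversary's post-sample weight reductions only help the upper bound on $W_{i+1}$; this is fine for the one-step inequality, but it is exactly these reductions that the adversary must use to restore a favorable variance profile, so any completed potential argument must account for them explicitly rather than discard them.
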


\cref{thm:main} then follows from \cref{thm:reduction} and \cref{lem:main}. 

\section{Analysis of the Sampling Process}
\label{sec:analysis}

This section is devoted to the proof of \cref{lem:main}. 
We view the adversary as a function fixed at the beginning of the argument. We start by normalizing the starting weights $w_0$ so that the average at the beginning is one, i.e., from now on we assume that $(\sum_{e \in E_0} w_0(e)) / k = 1$. For every $E \subseteq E_i$, we define $w_i(E) = \sum_{e \in E} w_i(e)$ and similarly $P_{D^\eps_i}(E) = \sum_{e \in E} P_{D^\eps_i}(e)$. In every step $i$, we consider the partition $E_i = B_i \sqcup M_i \sqcup S_i$ where $e \in E_i$ is in
\begin{enumerate}
    \item the big set $B_i$ iff $w_i(e) \geq 80$,
    \item the medium set $M_i$ iff $2 < w_i(e) < 80$ and
    \item the small set $S_i$ iff $w_i(e) \leq 2$. 
\end{enumerate}
The main idea of the analysis is to show that $w_i(B_i) = O(|S_i|)$, and thus $\frac{w_i(E_i)}{k-i} = \frac{O(|S_i|)}{|S_i| + |M_i| + |B_i|} = O(1)$, with probability $1 - e^{-\Omega(|S_i|)}$. This turns out (see the proof of \cref{lem:main}) that this is sufficient to show that the adversarial advantage is $O(1)$, i.e., that $\E\left[\frac{w_i(E_i)}{k-i}\right] = O(1)$. 

Roughly speaking, we call an iteration with $\ell$ small elements bad, if the total weight of the big elements is
greater than $4\ell$, which intuitively means the average drifted way above 1. In general we use the number of the small elements as our main way to refer to the iterations. Then in \cref{lem:approx} we denote with $\ell_{max}$ the number of small elements at the first bad iteration. Using that the previous iterations were good, and $w_{\start}(B_{\start}) \leq 8\ell$ for the bad iterations (\cref{def:bad}), we provide an upper bound on the average element size for the following iterations. Even though this bound is depending on the number of the small elements $\ell$, we show in \cref{lem:bad} that an iteration is bad with probability at most $e^{-\frac{\ell}{40}}$, which is enough to show the constant average in expectation.

The following definition is crucial for our analysis.

\begin{definition}
\label{def:bad}
For every $\ell \in \{1,2,\ldots,|S_0|\}$, we define $i_\ell$ as the smallest $i$ for which $|S_i| = \ell$.
We refer to a given $\ell \in \{1,2,\ldots,\lfloor|S_0|/2\rfloor\}$ as bad if both $w_{\start}(B_{\start}) \leq 8\ell$ and $w_{\test}(B_{\test}) > 4\ell$ and otherwise we refer to $\ell$ as good.
\end{definition}
Note that $i_\ell$ is well-defined in the sense that there has to exist at least one $i$ with $|S_i| = \ell$ for every $\ell \in \{1,2,\ldots,|S_0|\}$. This follows from $|S_{i+1}| \geq |S_i| - 1$ for every $i \in \{1,2,\ldots,k-1\}$ and $|S_{k-1}| \leq 1$.
\begin{lemma}
\label{lem:approx}
Let $\ell_{max}$ be defined as the largest $\ell \in \{1,2,\ldots,\lfloor|S_0|/2\rfloor\}$ such that $\ell$ is bad, if there exists such an $\ell$, and otherwise let $\ell_{max} = 1$. Then, for every $i \in \{0,1,\ldots,k-1\}$, we have $\frac{w_i(E_i)}{k-i} \leq 90\ell_{max}$.
\end{lemma}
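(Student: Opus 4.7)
My plan is to decompose $w_i(E_i) = w_i(S_i) + w_i(M_i) + w_i(B_i)$ and control $w_i(B_i)$; the small and medium parts already satisfy $w_i(S_i)+w_i(M_i)\leq 2|S_i|+80|M_i|\leq 80|E_i|$. Two simple tools drive everything: the monotonicity $w_{i+1}(B_{i+1})\leq w_i(B_i)$, which holds because weights only decrease and hence $B_{i+1}\subseteq B_i$; and the inequality $|S_0|>k/2$, which follows since each non-small element has weight strictly greater than $2$ while $\sum_{e\in E_0}w_0(e)=k$.

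I would introduce the auxiliary quantity $f(i):=\min_{j\leq i}|S_j|$, so that $f(i)\leq|S_i|$ and, whenever $f(i)\geq 1$, $i\geq i_{f(i)}$. The proof then splits into three cases. If $f(i)>\lfloor|S_0|/2\rfloor$, then $|E_i|\geq f(i)>|S_0|/2>k/4$, and the trivial bound $w_i(E_i)\leq w_0(E_0)=k$ gives ratio $<4\leq 90\ell_{max}$. If $\ell_{max}\geq 1$ is bad and $f(i)\leq\ell_{max}$ (with the corner $f(i)=0$ absorbed via $i\geq i_1\geq i_{2\ell_{max}}$), then $i\geq i_{f(i)}\geq i_{2\ell_{max}}$, so monotonicity together with the first clause of ``bad'', $w_{i_{2\ell_{max}}}(B_{i_{2\ell_{max}}})\leq 8\ell_{max}$, yields $w_i(B_i)\leq 8\ell_{max}$; a short sub-case analysis on whether $|S_i|$ and $|M_i|$ exceed $\ell_{max}$ then gives $w_i(E_i)/(k-i)\leq 90\ell_{max}$.

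The remaining case, $1\leq f(i)\leq\lfloor|S_0|/2\rfloor$ with $f(i)$ good (which in the bad regime requires $f(i)>\ell_{max}$, and in the default regime $\ell_{max}=1$ is automatic), is the heart of the argument. The main obstacle is to show that $f(i)$ must be good via clause~(b), i.e., $w_{i_{f(i)}}(B_{i_{f(i)}})\leq 4f(i)$. I would argue by contradiction via a doubling chain. If only clause~(a) held at $f(i)$, then $w_{i_{2f(i)}}(B_{i_{2f(i)}})>8f(i)=4\cdot(2f(i))$; this forces $2f(i)$ also to be good only via~(a), since clause~(b) at $2f(i)$ would yield the opposite inequality. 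Iterating, one gets $w_{i_{2^m f(i)}}(B_{i_{2^m f(i)}})>4\cdot 2^m f(i)$ for every $m$ with $2^{m-1}f(i)\leq\lfloor|S_0|/2\rfloor$. Taking $m$ minimal with $2^m f(i)>\lfloor|S_0|/2\rfloor$, one has $2^m f(i)\geq\lfloor|S_0|/2\rfloor+1$, so $4\cdot 2^m f(i)\geq 2|S_0|+2>k$, contradicting $w_j(B_j)\leq w_0(B_0)\leq k$ for every $j$. Hence $w_i(B_i)\leq w_{i_{f(i)}}(B_{i_{f(i)}})\leq 4f(i)\leq 4|S_i|$, so $w_i(E_i)\leq 6|S_i|+80|M_i|\leq 80|E_i|$, yielding ratio $\leq 80\leq 90\ell_{max}$. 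The leftover corner $f(i)=0$ in the default regime is absorbed by running this chain at $\ell=1$, giving $w_j(B_j)\leq 4$ for $j\geq i_1$ and the same final bound.
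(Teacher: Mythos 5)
Your proof is correct and takes essentially the same approach as the paper: both rely on the $S_i/M_i/B_i$ decomposition, the bound $|S_0| \geq k/2$, the monotonicity of $w_\cdot(B_\cdot)$, and the observation that a good $\ell$ with $w_{i_{2\ell}}(B_{i_{2\ell}}) \leq 8\ell$ forces $w_{i_\ell}(B_{i_\ell}) \leq 4\ell$. The paper packages the doubling as a single induction on $i$ showing $w_i(B_i) \leq \max(4|S_i|, 8\ell_{max})$, while you unroll it into an explicit case split on $f(i) = \min_{j \leq i}|S_j|$ together with a doubling-chain contradiction argument, which is a presentational rather than substantive difference.
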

\begin{proof}
We first prove by induction that $w_i(B_i) \leq \max(4|S_i|,8\ell_{max})$ for every $i \in \{0,1,\ldots,k-1\}$. As our base case, we consider any $i$ with $|S_i| \geq |S_0|/2$. Using that the average weight is $1$ at the beginning, we get $|S_0| \geq k/2$ by Markov's inequality and therefore $w_i(B_i) \leq k \leq 2|S_0| \leq 4|S_i|$. For our induction step, consider some arbitrary $i$ with $|S_i| < |S_0|/2$. Let $\ell := |S_i|$. First, we consider the case that $\ell_{max} \geq \ell$.  In particular, this implies $|S_{i-1}| \leq |S_i| + 1 \leq \ell + 1 \leq \ell_{max} + 1$ and therefore we get by induction that
\[w_i(B_i) \leq w_{i-1}(B_{i-1}) \leq \max(4|S_{i-1}|,8\ell_{max}) \leq \max(4(\ell_{max} + 1),8\ell_{max}) \leq 8\ell_{max}.\]
Thus, it suffices to consider the case that $\ell > \ell_{max}$, which in particular implies that $\ell$ is good. We have $\start < \test \leq i$ (since $\ell \leq |S_0|/2 \leq i$) and therefore we can assume by induction that $w_{\start}(B_{\start}) \leq \max(4(2\ell),8\ell_{max}) = 8\ell$. As $\ell$ is good, this implies that $w_{\test}(B_{\test}) \leq 4\ell$ and therefore $w_i(B_i)\leq w_{\test}(B_{\test}) \leq 4\ell = 4|S_i|$. This finishes the induction and thus we indeed have $w_i(B_i) \leq \max(4|S_i|,8\ell_{max})$ for every $i \in \{0,1,\ldots,k-1\}$. Therefore,

\[\frac{w_i(E_i)}{k-i} \leq \frac{w_i(E_i)}{|S_i| + |M_i| + |B_i|} \leq \frac{w_i(B_i)}{\max(|S_i|,1)} + \frac{80(|S_i| + |M_i|)}{|S_i| + |M_i|} \leq \max(4,8\ell_{max}) + 80 \leq 90 \ell_{max}.\]
\end{proof}
\begin{lemma}
\label{lem:bad}
Let $\ell \in \{1,2,\ldots,\lfloor|S_0|/2\rfloor\}$. Then, $\ell$ is bad with probability at most $e^{-\frac{\ell}{40}}$.
\end{lemma}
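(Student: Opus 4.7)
The plan is to bound the bad event by setting up an exponential supermartingale over the interval $(\start, \test]$. I first extract three key consequences of the bad event. Since $w_i(B_i)$ is non-increasing in $i$ (weights only decrease, and sampled elements leave $E$ entirely), the conditions $w_{\start}(B_{\start}) \leq 8\ell$ and $w_{\test}(B_{\test}) > 4\ell$ imply that $w_i(B_i) \in (4\ell, 8\ell]$ for every $i \in [\start, \test]$, and the cumulative drop in $w(B)$ over this interval is at most $4\ell$. Since each big sample removes weight at least $80$ from $w(B)$, this forces at most $\ell/20$ big samples during $(\start, \test]$. Moreover, since elements can enter $S$ only via adversarial weight reductions (never by sampling) and leave $S$ only by being sampled, the number of small samples during this interval is at least $|S_{\start}| - |S_{\test}| = \ell$.

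Next, let $A_i$ and $B^{\mathrm{samp}}_i$ denote the numbers of small and big samples in $(\start, i]$, and consider the exponential process $\Phi_i = \exp(\lambda A_i - \mu B^{\mathrm{samp}}_i)$ for constants $\lambda, \mu > 0$ to be chosen. On the bad event, $\Phi_{\test} \geq \exp((\lambda - \mu/20)\ell)$, so if $\Phi$ is a supermartingale in the natural filtration, Markov's inequality gives $\P(\text{bad}) \leq \exp(-(\lambda - \mu/20)\ell)$. The supermartingale condition reduces, after pessimizing over the adversary's choice of $D_i^\eps$ and of next weights, to $q_i(e^\lambda - 1) \leq p_i(1 - e^{-\mu})$ at each step, where $q_i$ and $p_i$ are the conditional probabilities of sampling a small and a big element respectively. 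Using $q_i/p_i \leq \frac{1+\eps}{1-\eps} \cdot \frac{w_i(S_i)}{w_i(B_i)} \leq \frac{3|S_i|}{2\ell}$, this condition holds provided $|S_i|$ stays within a constant multiple of $\ell$; a concrete choice such as $\lambda = 1/10$, $\mu = 3/2$ yields an exponent of $1/40$ and works whenever $|S_i| \leq 4\ell$ throughout.

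The main obstacle is that the adversary can inflate $|S_i|$ arbitrarily by cheaply converting medium elements of weight just above $2$ into small elements without appreciably decreasing $w(B)$, so the bound on $q_i/p_i$ above can fail. I would handle this via a case split on $T^* := \max_{i \in [\start, \test]} |S_i|$: when $T^* \leq C\ell$ for a suitable constant $C$, the supermartingale above delivers the bound directly; when $T^* > C\ell$, the inflow-outflow identity (small samples equal $\ell$ plus the total number of adversarial additions to $S$) forces at least $T^* - \ell$ small samples, so applying the same martingale machinery with the same big-sample budget $\ell/20$ yields a probability bound exponentially small in $T^*$, which is even stronger than $e^{-\ell/40}$. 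The hard part will be to choose the constants and (if needed) introduce an auxiliary stopping time at the first moment $|S_i|$ exceeds $C\ell$ so that both regimes combine to yield exactly the $e^{-\ell/40}$ bound uniformly while the supermartingale property is verified in both.
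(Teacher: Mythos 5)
Your setup is sound and correctly identifies the right quantitative ingredients: under the bad event, the total drop in $w(B)$ over $(\start,\test]$ is at most $4\ell$, so there are at most $\ell/20$ big samples, and since small elements only leave $S$ by being sampled there are at least $\ell$ small samples; the ratio bound $q_i/p_i \leq \frac{1+\eps}{1-\eps}\cdot\frac{w_i(S_i)}{w_i(B_i)} \leq \frac{3|S_i|}{2\ell}$ is also right, and your constants $\lambda=1/10$, $\mu=3/2$ do yield exponent $1/40$ provided $|S_i|$ stays a bounded multiple of $\ell$. The problem is exactly the one you flag but do not resolve: the supermartingale property $q_i(e^\lambda-1)\leq p_i(1-e^{-\mu})$ fails once the adversary inflates $|S_i|$, and the proposed case split on $T^*:=\max_i |S_i|$ does not rescue it. The constants $\lambda,\mu$ cannot depend on $T^*$ (which is a function of the future), and if you instead stop at the first time $|S_i|$ exceeds $C\ell$, you get $\E[\Phi_{\test\wedge\tau}]\leq 1$, but on the event $\{\tau\leq\test\}$ nothing forces $\Phi_\tau$ to be large --- the adversary can inflate $|S|$ with zero small samples made so far, leaving $\Phi_\tau\leq 1$, and the supermartingale gives no control over what happens from $\tau$ onward. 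Your claim that the martingale machinery then yields a bound ``exponentially small in $T^*$'' is not substantiated; re-tuning $\lambda\approx \ell/T^*$ to keep the supermartingale condition gives at best an $e^{-\Theta(\ell)}$ exponent, but again only if one could fix $T^*$ in advance, which one cannot.

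The paper sidesteps this entirely by indexing by \emph{level} rather than by time. For each $j\in\{1,\ldots,\ell\}$ it defines $i'_j$ as the first step $i$ with $|S_i|=2\ell-j+1$ and $e_{i+1}\notin M_i$ (such a step exists because $|S|$ decreases by at most one per step and must pass through every value from $2\ell$ down to $\ell$, and the transition $v\to v-1$ is a small sample). At each of these $\ell$ designated steps one has $|S_{i'_j}|\leq 2\ell$ by construction, so conditioned on the sample being non-medium it is big with probability at least $\frac{(1-\eps)\cdot 4\ell}{(1-\eps)\cdot 4\ell + (1+\eps)\cdot 2\cdot 2\ell}\geq \frac{1}{5}$. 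Setting $X_j=1$ when either $w_{i'_j}(B_{i'_j})\leq 4\ell$ or the sample at step $i'_j+1$ is big, these $X_j$ have conditional means $\geq 1/5$, and since $|B_{\start}|\leq \ell/10$, having $\sum X_j\geq\ell/10$ forces $B$ to be driven below the $4\ell$ threshold; a Chernoff bound then gives $e^{-\ell/40}$. The key structural idea your write-up is missing is this level-set parameterization: because the indicators are pinned to the $\ell$ specific levels in $[\ell+1,2\ell]$, the adversary's ability to push $|S_i|$ far above $2\ell$ at other times is irrelevant. To make your supermartingale approach rigorous you would essentially need to restrict the count of small samples to these designated level-crossings, at which point it becomes the paper's argument in disguise.
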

For the proof of \cref{lem:bad}, we need the following Chernoff-bound variant.

\begin{lemma}[Chernoff bound]
\label{lem:chernoff}
Let $X_1, \dots, X_\ell$ be independent Bernoulli-distributed random variables, each equal to one with probability $p$. Then,
\[
\P\left(  \sum_{i = 1}^\ell X_i < \frac{p \ell}{2} \right) \leq e^{-p\ell/8}. 
\]
\end{lemma}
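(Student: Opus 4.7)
The plan is to prove this via the standard moment generating function (MGF) method applied to the lower tail, and then to optimize the free parameter to recover the specific constant $1/8$ claimed.

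First, I would set $S := \sum_{i=1}^{\ell} X_i$ and $\mu := p\ell = \E[S]$. For any $t > 0$, the random variable $e^{-tS}$ is nonnegative, and the event $\{S < \mu/2\}$ coincides with $\{e^{-tS} > e^{-t\mu/2}\}$, so Markov's inequality gives
\[
\Prob{S < \mu/2} \;\le\; e^{t\mu/2} \cdot \E\bigl[e^{-tS}\bigr].
\]
By independence of the $X_i$ and the fact that each $\E[e^{-tX_i}] = 1 - p + p e^{-t} = 1 - p(1 - e^{-t})$, the MGF factorizes as $\E[e^{-tS}] = \bigl(1 - p(1-e^{-t})\bigr)^{\ell}$. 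Using $1 + x \le e^{x}$ for all real $x$, I bound this by $\exp\bigl(-p\ell (1 - e^{-t})\bigr)$.

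Combining the two inequalities, for every $t > 0$,
\[
\Prob{S < p\ell/2} \;\le\; \exp\!\Bigl(p\ell \bigl(\tfrac{t}{2} - 1 + e^{-t}\bigr)\Bigr).
\]
I would then pick $t = \ln 2$, which yields the exponent $p\ell \cdot \bigl(\tfrac{\ln 2}{2} - 1 + \tfrac{1}{2}\bigr) = p\ell \cdot \bigl(\tfrac{\ln 2}{2} - \tfrac{1}{2}\bigr)$. Since $\ln 2 \approx 0.693 < 3/4$, we have $\tfrac{\ln 2}{2} - \tfrac{1}{2} < -\tfrac{1}{8}$, and therefore $\Prob{S < p\ell/2} \le e^{-p\ell/8}$, which is exactly the claimed bound.

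There is essentially no serious obstacle here: the only non-mechanical step is choosing a value of $t$ that is simultaneously simple to work with and gives a constant at least $1/8$ in the exponent. The choice $t = \ln 2$ is convenient because $e^{-t} = 1/2$ cancels nicely and the resulting numerical check $\ln 2 < 3/4$ is elementary. (One could instead minimize $t/2 - 1 + e^{-t}$ exactly, which also occurs at $t = \ln 2$, but this is not needed since the resulting constant already beats $1/8$.)
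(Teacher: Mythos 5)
Your proof is correct. The paper states \cref{lem:chernoff} as a standard Chernoff-bound variant without providing a proof, and your argument is exactly the canonical moment-generating-function derivation that one would cite for it: Markov on $e^{-tS}$, factorization by independence, the bound $1+x \le e^x$, and the optimal choice $t = \ln 2$ (which indeed minimizes $t/2 - 1 + e^{-t}$, and the numerical check $\tfrac{\ln 2}{2} - \tfrac{1}{2} < -\tfrac{1}{8}$ is sound). There is no gap.
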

\begin{proof}[Proof of \cref{lem:bad}]
Throughout the proof, we assume that $w_{\start}(B_{\start}) \leq 8\ell$. In particular, 
\[|B_{\start}| \leq \frac{w_{\start}(B_{\start})}{80} \leq \frac{\ell}{10}.\]
Below, we will define for every $j \in \{1,2,\ldots,\ell\}$ an indicator variable $X_j$ in such a way that

\begin{enumerate}
    \item $\E[X_j|X_1,X_2,\ldots,X_{j-1}] \geq \frac{1}{5}$ for every $j \in \{1,2,\ldots,\ell\}$ and 
    \item if $X := \sum_{j=1}^{\ell} X_j \geq \frac{\ell}{10}$, then $w_{\test}(B_{\test}) \leq 4\ell$.
\end{enumerate}
The first property implies that $X$ stochastically dominates a random variable $X'$ which is the sum of $\ell$ independent Bernoulli-distributed random variables, each equal to one with probability $1/5$. Thus, using \cref{lem:chernoff}, we get
\[\P\left[X < \frac{\ell}{10}\right] \leq \P\left[X' < \frac{\ell}{10}\right] \leq e^{-\frac{\ell}{40}}.\]
Thus, we can now use the second property to deduce that $\ell$ is bad with probability at most $e^{-\frac{\ell}{40}}$. It thus remains to define the random variables and show that they indeed satisfy the two properties.
To that end, fix some $j \in \{1,2,\ldots,\ell\}$. We define $i'_j$ as the smallest $i \in \{\start,\start+1,\ldots,\test - 1\}$ with $|S_i| = 2\ell - j + 1$ and $e_{i+1} \notin M_i$. Note that there exists at least one such $i$ as there exists some $i$ with $|S_i| = 2\ell - j + 1$ and $|S_{i+1}| = 2\ell - j$, and for this $i$ it holds that $e_{i+1} \in S_i$ and therefore $e_{i+1} \notin M_i$. Note that it furthermore holds that $i'_1 < i'_2 < \ldots < i'_\ell$. We set $X_j = 1$ if $w_{i'_j}(B_{i'_j}) \leq 4 \ell$ or $e_{i'_j+1} \in B_{i'_j}$ and otherwise we set $X_j = 0$. We start by showing that the second property holds by proving the contrapositive. To that end, assume that $w_{\test}(B_{\test}) > 4\ell$. In particular, we have for every $j$ that $w_{i'_j}(B_{i'_j}) > 4 \ell$. Thus, if $X_j = 1$, we get $e_{i'_j+1} \in B_{i'_j}$ and therefore $|B_{i'_j + 1}| \leq |B_{i'_j}| - 1$. As $|B_{\start}| < \frac{\ell}{10}$, we therefore get that $X < \frac{\ell}{10}$, as needed.

It remains to show the first property. To that end, consider any $i$ and assume we have already sampled $e_1,\ldots,e_i$ in an arbitrary manner such that $|S_i| \leq 2\ell$ and $w_i(B_i) \geq 4\ell$. Then, conditioned on $e_{i+1} \notin M_i$, we get $e_{i+1} \in B_i$ with probability at least

\[\frac{D^{\eps}_i(B_i)}{D^{\eps}_i(B_i) + D^{\eps}_i(S_i)} \geq \frac{(1-\eps)w_i(B_i)}{(1-\eps)w_i(B_i) + (1+\eps)w_i(S_i)} \geq \frac{0.5 \cdot 4 \ell}{0.5 \cdot 4 \ell + 1.5 \cdot 2 \cdot 2 \ell} \geq \frac{1}{5}.\]

In particular, this directly implies $\E[X_j|X_1,X_2,\ldots,X_{j-1}] \geq \frac{1}{5}$ for every $j \in \{1,2,\ldots,\ell\}$.
 \end{proof}

Finally, we are ready to prove \cref{lem:main} by combining \cref{lem:approx,lem:bad}.
\begin{proof}[Proof of \cref{lem:main}]
Fix some $i \in \{0,1,\ldots,k-1\}$. Let $\ell_{max}$ be defined as in \cref{lem:approx}. \cref{lem:approx} gives that for every $\ell$ with $Pr[\ell_{max} = \ell] > 0$, we have
\[\E\left[\frac{\sum_{e \in E_i} w_i(e)}{k-i}| \ell_{max} = \ell\right] \leq 90 \ell.\]
Moreover, for $\ell >1$ , we can use \cref{lem:bad} to deduce that $\P[\ell_{max} = \ell] \leq \P[\text{$\ell$ is bad}] \leq e^{-\frac{\ell}{40}}$.
Therefore,
\[\E\left[\frac{\sum_{e \in E_i} w_i(e)}{k-i}\right] \leq \sum_{\ell = 1}^\infty 90 \ell \cdot e^{-\frac{\ell - 1}{40}} = O(1).\]
\end{proof}

\bibliographystyle{alpha}
\bibliography{ref}

\end{document}